\newtheorem{theorem}{Theorem}[section]
\newtheorem{proposition}[theorem]{Proposition}
\theoremstyle{definition}
\newtheorem{example}{Example}
\theoremstyle{remark}
\newtheorem*{remark}{Remark}
\newcommand{\remove}[1]{}
\newcommand{\ie}{\textit{i.e.}}
\begin{document}

\title[A simple combinatorial interpretation of certain generalized Bell and Stirling]{A simple combinatorial interpretation of certain generalized Bell and Stirling numbers}


\author{Pietro Codara, Ottavio M. D'Antona, Pavol Hell}


\date{\today}


\address[P. Codara, O. M. D'Antona]
{Dipartimento di Informatica, Universit\`{a} degli Studi di Milano
via Comelico 39/41, I-20135 Milano, Italy}
\email{\{codara,dantona\}@di.unimi.it}
\address[P. Hell]
{School of Computing Science, Simon Fraser University, Burnaby, B.C., Canada, V5A 1S6}
\email{pavol@sfu.ca}

\keywords{}

\begin{abstract}
In a series of papers, P. Blasiak et al. developed a wide-ranging
generalization of Bell numbers (and of Stirling numbers
of the second kind) that appears to be relevant to the so-called
\emph{Boson normal ordering problem}.
They provided a recurrence and, more recently, also offered a
(fairly complex) combinatorial interpretation
of these numbers. We show that by restricting the numbers somewhat
(but still widely generalizing Bell and Stirling
numbers), one can supply a much more natural combinatorial
interpretation. In fact, we offer two different such
interpretations, one in terms of graph colourings and another one
in terms of certain labelled Eulerian digraphs.
\end{abstract}
\maketitle

\section{Introduction}
\label{sec:intro}

In \cite{blasiak1,blasiak2,blasiak3,blasiak4} P. Blasiak et al.
introduced coefficients $B_{r,s}(n)$, and $S_{r,s}(n,k)$ that
provide a wide-ranging generalization of Bell numbers, and of
Stirling numbers of the second kind, respectively. In particular
they defined the generalized Bell polynomial (see
\cite[Equations (1.5) and (2.1)]{blasiak1})\footnote{We denote
by $(x)_n$ the falling factorial $x(x-1)\cdots (x-n+1)$.
(Note that the authors of \cite{blasiak1,blasiak2,blasiak3,blasiak4}
use the symbol $x^{\underline{n}}$ instead.)}

\begin{equation}\label{eq:gen_Bell_poly}
\begin{split}
B_{r,s}(n,t)& =\sum_{k=s}^{ns} S_{r,s}(n,k)t^k\ =\\
&=e^{-t}\sum_{k=0}^{\infty}\frac{1}{k!}\prod_{j=1}^n((k+(j-1)(r-s))_st^k,
\end{split}
\end{equation}

\noindent where $r, s, n, k$ are positive integers and $r \geq s$.

These coefficients generalize Bell numbers, and Stirling numbers of
the second kind, usually denoted $B_n$, and $S(n,k)$, respectively,
because by letting $r=s=t=1$ in the above formula, one obtains the
classical formula of Dobinski \cite{comtet}

\begin{equation}\label{eq:dobinski}
B_{1,1}(n)=\frac{1}{e}\sum_{k=0}^{\infty}\frac{k^n}{k!}.
\end{equation}

\noindent In fact $B_n=B_{1,1}(n)$, and $S(n,k)=S_{1,1}(n,k).$

\medskip
The work of P. Blasiak et al. was motivated by the fact that their
coefficients appear to be relevant for the so called \emph{Boson
normal ordering problem}.

\medskip
In \cite{blasiak2} the authors asked for a combinatorial
interpretation of these coefficients. Later on, in \cite{blasiak4},
they provided one such interpretation, in terms of what they called
\emph{colonies of bugs}. We refer to \cite[Section III]{blasiak4} for
the exact definition, but we remark that a colony of bugs is a fairly
complex object that corresponds to a labelled tree whose vertices
include {\em labels} as well as {\em cells}. Each bug in a colony
corresponds to a subtree, and has a {\em  type} $(r,s)$; it consists
of a body of $r$ cells, as well as of $s$ {\em legs}, some of which
can be {\em free}  \cite[Section III]{blasiak4}. It turns out
(\cite[Theorem 3.1]{blasiak4}) that $B_{r,s}(n)$ counts the number
of colonies of $n$ bugs each of type $(r,s)$, and that $S_{r,s}(n,k)$
counts the number of such colonies having exactly $k$ free legs.

In this note we suggest a simpler combinatorial interpretation of
these coefficients, at least in some important cases. Our interpretations
are stated in standard combinatorial terminology, in terms of colourings
and labeled Eulerian digraphs.

\medskip
Our focus is the case $r=s$. We supply two simple combinatorial
interpretations of the coefficients $B_{m,m}(n)$ and $S_{m,m}(n,k)$, for
all positive integers $m, n, k$. We note that these coefficients are still
much more general than the Bell numbers $B_{1,1}(n)$ and the Stirling
numbers of the second kind $S_{1,1}(n,k)$. Our first interpretation
(Section \ref{sec:stablepart}) is in terms of colourings of a certain graph.
In Sections \ref{sec:cycles} we supply another interpretation of the same
numbers in terms of the number of certain labeled Eulerian digraphs.
Finally, in Section \ref{sec:21case} we remark that in the general case
when $r$ and $s$ are different, there appear to be in certain cases
well-known simple combinatorial interpretations as well; we discuss
mostly the case $r=2, s=1$, but also remark on possible connections
for certain values in the cases $r>2$ and $s=1$.

\section{Colourings}
\label{sec:stablepart}

A $k$-{\em colouring} of a graph $G$ is a partition of the vertex set of $G$
into $k$ non-empty stable sets, \ie{} sets not containing adjacent vertices.
Each such stable set is called a {\em colour-class} of the partition.

Sometimes a $k$-colouring is defined as a mapping of vertices into a set of
$k$ colours, so that adjacent vertices obtain different colours. We note that
for us the names of the colours do not play a role, \ie, two mappings that
yield the same partition are considered the same colouring. Moreover, we
require that each colour-class is non-empty (which corresponds to the
requirement that each colour is used).

We denote by $K_m$ the complete graph on $m$ vertices, and by
$nK_m$ the disjoint union of $n$ copies of $K_m$.

For positive integers $ m,n, k$, let $C_{m}(n,k)$ denote the number of
$k$-colourings of $nK_m$. We first prove a recurrence for the numbers
$C_{m}(n,k)$.

\begin{proposition}
\label{prop:rec1}
We have
\begin{equation}
\label{eq:formula-rec1}
C_{m}(n,k)=\sum_{i=0}^m\binom{m}{i}(k-i)_{m-i} C_{m}(n-1,k-i)\,,
\end{equation}
with initial conditions
\begin{align*}
C_{m}(n,k)& = 0 \text{ whenever } k<m\,,\ \ and\\
C_{m}(1,k)& = \begin{cases}
1 & \text{if }\ k=m\,, \\
0 & \text{otherwise}\,.
\end{cases}
\end{align*}
\end{proposition}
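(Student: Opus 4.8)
The plan is to prove the statement by a direct, bijective counting argument, classifying each $k$-colouring of $nK_m$ according to how its last copy of $K_m$ interacts with the colours used on the remaining $n-1$ copies. The starting observation is that the $m$ vertices of a single $K_m$ are pairwise adjacent, so in any colouring they must lie in $m$ distinct colour-classes. This observation already yields the initial conditions: at least $m$ colours are needed to colour even one copy, so $C_m(n,k)=0$ whenever $k<m$; and for a single copy ($n=1$) the only colouring is the partition of the $m$ vertices into $m$ singleton classes, whence $C_m(1,m)=1$ and $C_m(1,k)=0$ otherwise (too few colours is impossible, and more than $m$ non-empty classes cannot be formed from $m$ vertices).

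For the recurrence I would fix a $k$-colouring of $nK_m$ and focus on the last copy of $K_m$. Call a colour appearing on that copy \emph{new} if it appears on none of the first $n-1$ copies, and let $i$ be the number of new colours, so that $0\le i\le m$. Deleting the $m$ vertices of the last copy destroys exactly those colour-classes that were supported solely on that copy; since each such class meets the last clique in a single vertex (by adjacency), there are precisely $i$ of them. The remaining vertices therefore carry a $(k-i)$-colouring of $(n-1)K_m$. Conversely, I would reconstruct an arbitrary $k$-colouring with exactly $i$ new colours from a $(k-i)$-colouring of the first $n-1$ copies by: choosing which $i$ of the $m$ vertices of the last clique receive the (singleton-starting) new colours, in $\binom{m}{i}$ ways; and then assigning the remaining $m-i$ vertices injectively to the $k-i$ existing colour-classes (injectivity being forced by the clique), in $(k-i)_{m-i}$ ways. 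Summing the product $\binom{m}{i}(k-i)_{m-i}\,C_m(n-1,k-i)$ over $i=0,\dots,m$ gives exactly \eqref{eq:formula-rec1}.

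The main point requiring care — and the step I expect to be the real obstacle — is verifying that the described correspondence is a genuine bijection, which hinges on the fact that colours are unlabelled (colourings are partitions, not maps). In particular I must check that restriction to the first $n-1$ copies always yields a colouring with \emph{exactly} $k-i$ non-empty classes, and that the reconstruction recovers each $k$-colouring once and only once; the clique constraint guarantees both that a destroyed class met the last copy in a single vertex and that the $m-i$ reused colours land in distinct classes, so no double counting occurs. As a sanity check one may note that for $m=1$ the recurrence collapses to $C_1(n,k)=k\,C_1(n-1,k)+C_1(n-1,k-1)$, the familiar recurrence for the Stirling numbers $S(n,k)$, as expected since $nK_1$ is simply $n$ isolated vertices.
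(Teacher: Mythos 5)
Your proof is correct and follows essentially the same route as the paper: both arguments single out one copy of $K_m$, let $i$ be the number of its vertices placed in singleton (new) colour-classes, and count $\binom{m}{i}$ choices for those vertices times $(k-i)_{m-i}$ injective placements of the remaining $m-i$ vertices into the classes of a $(k-i)$-colouring of $(n-1)K_m$. Your version merely spells out the deletion/reconstruction bijection more explicitly than the paper's two-step construction, which is a harmless elaboration rather than a different method.
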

\begin{proof}
The case $k<m$ is trivial (with fewer then $m$ colours we cannot colour $K_m$).
It is also obvious that when $n=1$ we have a unique $k$-colouring of $K_m$ when
$k=m$, and none when $k>m$.

To prove the recurrence, we describe how to obtain, in two steps, all $k$-colourings
of $nK_m$, for $k\geq m$ and $n\neq 1$. Fix an arbitrary copy of $K_m$.

\smallskip
(1) Choose $i$ vertices of the fixed $K_m$, each forming a singleton colour-class.

\smallskip
(2) Insert the remaining $m-i$ vertices of the fixed $K_m$ in the colour-classes of all
$(k-i)$-colourings of $(n-1)K_m$.

\smallskip
Step (1) can be done in $\binom{m}{i}$ ways, and step (2) in
$(k-i)_{m-i} C_{m}(n-1,k-i)$ ways. Our claim is proved.
\end{proof}

\begin{remark}
Note that  $C_{m}(n,nm)=1$.
\end{remark}

We now have the following result.

\begin{proposition}
\label{prop:colouring_Smm}
$S_{m,m}(n,k)$ counts the number of $k$-colourings of $nK_m$.
In other words, $S_{m,m}(n,k)=C_{m}(n,k)$.
 \end{proposition}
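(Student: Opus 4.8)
The plan is to bypass the recurrence and work directly from the Dobinski-type formula \eqref{eq:gen_Bell_poly}. Setting $r=s=m$ kills the term $(j-1)(r-s)$, so the inner product collapses and
\begin{equation*}
B_{m,m}(n,t)=e^{-t}\sum_{k=0}^{\infty}\frac{t^k}{k!}\,\bigl((k)_m\bigr)^n .
\end{equation*}
My aim is to show that the right-hand side equals $\sum_{j} C_m(n,j)\,t^j$; since by definition $B_{m,m}(n,t)=\sum_k S_{m,m}(n,k)\,t^k$, comparing coefficients of $t^k$ will then finish the proof.

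The bridge is a combinatorial reading of $\bigl((k)_m\bigr)^n$. First I would observe that $(k)_m$ is the number of injections of the $m$ vertices of a single $K_m$ into a fixed palette of $k$ distinguishable colours (a proper colouring of $K_m$ forces distinct colours on all its vertices), and that the $n$ copies colour independently; hence $\bigl((k)_m\bigr)^n$ counts the proper colourings of $nK_m$ using colours from $\{1,\dots,k\}$, where colours are labelled and not necessarily all used. Sorting these colourings by the set of colours actually used, I would then write
\begin{equation*}
\bigl((k)_m\bigr)^n=\sum_{j}\binom{k}{j}\,j!\,C_m(n,j)=\sum_{j}(k)_j\,C_m(n,j),
\end{equation*}
because choosing which $j$ colours are used contributes $\binom{k}{j}$, and a surjective labelled colouring onto those $j$ colours is exactly a partition into $j$ non-empty stable classes together with a bijective naming of them, giving $j!\,C_m(n,j)$. (The sum is finite, running over $m\le j\le nm$.)

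The last ingredient is the elementary Dobinski-type identity, valid for each fixed $j$,
\begin{equation*}
e^{-t}\sum_{k=0}^{\infty}\frac{t^k}{k!}\,(k)_j=e^{-t}\sum_{k\ge j}\frac{t^k}{(k-j)!}=t^j .
\end{equation*}
Substituting the expansion of $\bigl((k)_m\bigr)^n$ into the formula for $B_{m,m}(n,t)$ and interchanging the (finite) outer sum over $j$ with the series over $k$ then yields $B_{m,m}(n,t)=\sum_j C_m(n,j)\,t^j$, as desired. I expect the main obstacle to be the middle step: justifying the expansion $\bigl((k)_m\bigr)^n=\sum_j (k)_j\,C_m(n,j)$ cleanly, i.e. matching labelled surjective colourings with the unlabelled $k$-colourings counted by $C_m$, and making sure the index ranges and the convention $C_m(n,j)=0$ for $j<m$ are consistent. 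As an alternative route, one could instead derive a recurrence for $S_{m,m}(n,k)$ directly from \eqref{eq:gen_Bell_poly} and check that it coincides with the recurrence of Proposition \ref{prop:rec1}, together with the same initial conditions; the generating-function argument above seems shorter and more transparent.
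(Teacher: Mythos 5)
Your proof is correct, and it takes a genuinely different route from the paper's. The paper argues via recurrences: it quotes the recurrence $S_{r,r}(n+1,k)=\sum_{p=0}^r\binom{k+p-r}{p}(r)_p\,S_{r,r}(n,k+p-r)$ from \cite{blasiak2} and checks, using the identity $(m)_i\binom{k+i-m}{i}=(k+i-m)_i\binom{m}{i}$, that it coincides with the recurrence of Proposition \ref{prop:rec1} for $C_m(n,k)$; equality then follows from the common initial conditions. You instead work straight from the Dobinski-type definition (\ref{eq:gen_Bell_poly}): for $r=s=m$ the product collapses to $((k)_m)^n$, which you interpret as the number of proper colourings of $nK_m$ from a labelled palette of $k$ colours, expand as $((k)_m)^n=\sum_j (k)_j\,C_m(n,j)$ by grouping colourings according to the set of colours actually used, and finish with the identity $e^{-t}\sum_{k\ge 0}(k)_j\,t^k/k!=t^j$. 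All three steps are sound; the interchange of summations is harmless since $j$ ranges over the finite interval $m\le j\le nm$, and the step you flagged as delicate is just the standard correspondence between surjective labelled colourings and partitions into non-empty stable classes (moreover, since the polynomials $(k)_j$ are linearly independent, the expansion determines $C_m(n,j)$ as connection coefficients, which is why comparing coefficients of $t^j$ is decisive). Your argument buys self-containedness -- it uses neither Proposition \ref{prop:rec1} nor the external recurrence from \cite{blasiak2}, and it yields (\ref{eq:Bm}) as a by-product -- while the paper's argument buys brevity, given that Proposition \ref{prop:rec1} is already in place.
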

\begin{proof}
A simple manipulation of the formulas shows that recurrence
(\ref{eq:formula-rec1}) coincides with the recurrence (21) in 
\cite{blasiak2}, namely:
\begin{align*}
%
&S_{r,r}(n+1,k)=\sum_{p=0}^r\binom{k+p-r}{p}(r)_p S_{r,r}(n,k+p-r)\,.
\end{align*}
Indeed,
$(m)_i\binom{k+i-m}{i}=(k+i-m)_i\binom{m}{i}$.
\end{proof}

Needless to say, the recurrence (\ref{eq:formula-rec1}) generalizes
the classical recursion for the Stirling numbers of the second
kind. Using (\ref{eq:formula-rec1}), we can compute a few examples.
In Table \ref{tab:S_3} we compute the number of $k$-colourings of $nK_3$.
%
\begin{table}[h]\footnotesize
 \begin{center}
  \begin{tabular}{c|rrrrrrrr}
  \toprule
  $S_{3,3}(n,k)$ & \text{k=3} & 4 & 5 & 6 & 7 & 8 & 9 & 10 \\
  \midrule
 \text{n=1} & 1 & \text{} & \text{} & \text{} & \text{} & \text{} & \text{} & \text{} \\
 2 & 6 & 18 & 9 & 1 & \text{} & \text{} & \text{} & \text{} \\
 3 & 36 & 540 & 1242 & 882 & 243 & 27 & 1 & \text{} \\
 4 & 216 & 13608 & 94284 & 186876 & 149580 & 56808 & 11025 & 1107 \\
 5 & 1296 & 330480 & 6148872 & 28245672 & 49658508 & 41392620 & 18428400 & 4691412 \\
  \bottomrule
  \end{tabular}
 \end{center}
 \caption{$S_{3,3}(n,k)$.}
 \label{tab:S_3}
\end{table}

Table \ref{tab:S_3} also appears in \cite[Table 1]{blasiak1}.
Denoting by $B_m(n)$ the number of \emph{all} colourings of $n K_m$, we have (cf. \cite[Equation (1.5)]{blasiak1}):
\begin{equation}\label{eq:Bm}
B_m(n) = \sum_{k=m}^{nm} C_{m}(n,k) = \sum_{k=m}^{nm} S_{m,m}(n,k) = B_{m,m}(n)\,.
\end{equation}
For instance, summing the rows of Table \ref{tab:S_3} we obtain
$1, 34, 2971, 513559, \dots$, that is the sequence $B_{3,3}(n)$ in \cite{blasiak1},
that is the sequence A069223 from \cite{oeis}.
\begin{example}Figure \ref{fig:3k3} shows the graph $2K_3$.

\bigskip
\begin{center}
\begin{tikzpicture}[auto,node distance=2cm,
  thick,main node/.style={circle,fill=blue!20,draw,font=\sffamily\Large\bfseries}]
\node[main node] (1) {a};
  \node[main node] (2) [above right of=1] {b};
  \node[main node] (3) [below right of=2] {c};
  \node[main node] (4) [ right of=3] {d};
\node[main node] (5) [ above right of=4] {e};
  \node[main node] (6) [ below right of=5] {f};
    \path[color= red, every node/.style={color=red}]
    (1)  edge [] node[above] {} (2)
   (2)  edge [] node[below] {} (3)
 (1)  edge [] node[above] {} (3)
  (4)  edge [] node[above] {} (5)
   (5)  edge [] node[below] {} (6)
 (4)  edge [] node[above] {} (6);
  \end{tikzpicture}
   \captionof{figure}{The graph $2 K_3$.}
   \label{fig:3k3}
  \end{center}

\smallskip

The eighteen $4$-colourings of $2K_3$ are
\begin{align*}
a|d|be|cf & & a|d|bf|ce & & a|e|bd|cf & & a|e|bf|cd & & a|f|bd|ce & & a|f|be|cd\, \\
ad|b|e|cf & & ad|b|f|ce & & ae|b|d|cf & & ae|b|f|cd & & af|b|d|ce & & af|b|e|cd\, \\
ad|be|c|f & & ad|bf|c|e & & ae|bd|c|f & & ae|bf|c|d & & af|bd|c|e & & af|be|c|d\,.
\end{align*}
\end{example}

\section{Labelled Eulerian Digraphs}
\label{sec:cycles}

We consider digraphs that allow loops and multiple edges in the same direction.
A digraph $G$ is \emph{Eulerian} if at every vertex the in-degree equals
the out-degree. (Note that we do not require $G$ to be connected.)
The edge set of an Eulerian digraph $G$ can be partitioned into directed cycles.
We call an Eulerian digraph \emph{$(n,m)$-labelled} if
its edge set is partitioned into $n$ directed $m$-cycles, each with a distinguished
first edge (and hence a unique second, third, etc.,
$m$-th edge). Figure 2 shows a $(2,3)$-labelled Eulerian digraph, with its 2
directed 3-cycles; the $j^{th}$ edge of the $i^{th}$ cycle is labelled $e_{i,j}$.

\medskip
\begin{center}
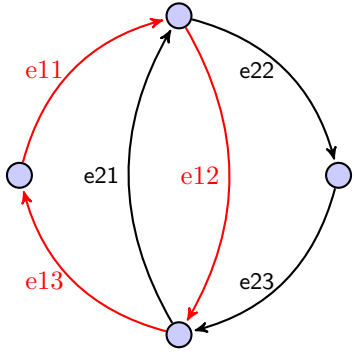

\begin{tikzpicture}[->,>=stealth',shorten >=1pt,auto,node distance=3cm,
  thick,main node/.style={circle,fill=blue!20,draw,font=\sffamily\Large\bfseries}]

  \node[main node] (2) {};
  \node[main node] (1) [below left of=2] {};
  \node[main node] (3) [below right of=1] {};
  \node[main node] (4) [below right of=2] {};

  \path[color= red, every node/.style={color=red}]
    (1)  edge [bend left] node[left] {e11} (2)
    (2)  edge [bend left]node[left] {e12} (3)
    (3)  edge [bend left]node[left] {e13} (1);
 \path[every node/.style={font=\sffamily\small}]
    (2)  edge [bend left]node[left] {e22} (4)
    (3)  edge [bend left]node[left] {e21} (2)
    (4)  edge [bend left]node[left] {e23} (3);
\end{tikzpicture}
\captionof{figure}{A $(2,3)$-labelled Eulerian digraph.}
\label{fig:T_3(2)}
\end{center}

\begin{theorem}
\label{Pavol}
The number of $(n,m)$-labelled Eulerian digraphs is equal to $B_{m,m}(n)$.
\end{theorem}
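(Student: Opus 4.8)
The plan is to exhibit an explicit bijection between the colourings of $nK_m$ and the $(n,m)$-labelled Eulerian digraphs, refined by the number of colour-classes, and then to sum over this parameter. By Proposition \ref{prop:colouring_Smm} the number of $k$-colourings of $nK_m$ equals $S_{m,m}(n,k)$, and by equation (\ref{eq:Bm}) summing these over $k$ yields $B_{m,m}(n)$; so it suffices to match the two families object by object. Write the vertices of the $i$-th copy of $K_m$ as $v_{i,1},\dots,v_{i,m}$.

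Starting from a colouring of $nK_m$, I would let the (unlabelled) colour-classes be the vertices of the digraph, and turn the $i$-th copy of $K_m$ into a directed $m$-cycle by drawing, for each $j$, the edge $e_{i,j}$ from the class of $v_{i,j}$ to the class of $v_{i,j+1}$ (indices mod $m$), with $e_{i,1}$ the distinguished first edge. First I would check this is well defined: since $v_{i,1},\dots,v_{i,m}$ are mutually adjacent in $K_m$ they receive $m$ distinct colours, so the $i$-th cycle visits $m$ distinct vertices and is a genuine (simple) directed $m$-cycle. Next I would verify the Eulerian condition: both the in-degree and the out-degree of a vertex $C$ equal the number of $v_{i,j}$ lying in the class $C$, i.e. $|C|$, so in-degree equals out-degree everywhere. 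Thus the construction does produce an $(n,m)$-labelled Eulerian digraph whose vertex set is in bijection with the colour-classes.

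For the inverse I would start from an $(n,m)$-labelled digraph, read off the $i$-th cycle from its distinguished first edge as $u_{i,1}\to u_{i,2}\to\cdots\to u_{i,m}\to u_{i,1}$, and colour $v_{i,j}$ with the class $u_{i,j}$. Simplicity of the cycle guarantees the $u_{i,j}$ are distinct, so each copy of $K_m$ is properly coloured; and every vertex of the digraph lies on some cycle, hence is the tail of some $e_{i,j}$ and therefore a non-empty colour-class. The two maps are visibly mutually inverse, the edge labels $e_{i,j}$ carrying exactly the data needed to recover the assignment $v_{i,j}\mapsto u_{i,j}$. This refined bijection shows that the $(n,m)$-labelled Eulerian digraphs with exactly $k$ vertices are counted by $S_{m,m}(n,k)$; summing over $k$ from $m$ to $nm$ and invoking (\ref{eq:Bm}) gives the theorem.

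The point that needs the most care is that both colour-classes and digraph vertices are \emph{anonymous}: the matching must respect the fact that renaming the colours (equivalently, relabelling the digraph vertices) does not produce a new object. What makes the correspondence a clean bijection despite this anonymity is the edge labelling: the first index $i$ identifies the copy of $K_m$, equivalently the cycle, and the second index $j$, obtained by following the cycle from its distinguished first edge, records which vertex of that copy sits in which class. I would also check that for $m\ge 2$ no loops or non-simple closed walks can arise (they are excluded precisely by properness of the colouring), while for $m=1$ the ``$1$-cycles'' are loops and the statement reduces to the classical fact that partitions of $n$ labelled loops among anonymous vertices are counted by the Bell number $B_{1,1}(n)=B_n$.
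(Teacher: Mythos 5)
Your proof is correct and takes essentially the same route as the paper: an explicit bijection between $k$-colourings of $nK_m$ and $(n,m)$-labelled Eulerian digraphs on $k$ vertices (the arc $e_{i,j}$ joining the class of $v_{i,j}$ to the class of $v_{i,j+1}$, indices mod $m$), followed by summation over $k$ via Proposition \ref{prop:colouring_Smm} and (\ref{eq:Bm}). If anything, yours is the more carefully stated version, since you use one consistent (tail) convention in both directions, whereas the paper's forward map reads off in-edges while its inverse sends the head of $e_{i,j}$ to the class of $v_{i,j+1}$ --- a harmless off-by-one cyclic shift.
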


\begin{proof}
We show a bijection between the set of $(n,m)$-labelled Eulerian digraphs
and the number of colourings of $n K_m$. To this end we assign an
arbitrary order to the $n$ cliques of $nK_m$. Thus the vertices of $nK_m$ will be
called $v_{i,j}$ for  $i=1, 2, \dots, n$, and $j=1, 2, \cdots, m$. We define a
bijective mapping $\phi$ associating $e_{i,j}$ with $v_{i,j}$. (Here $e_{i,j}$ is the
$i^{th}$ edge of the $j^{th}$ cycle.)

\medskip\noindent$\bullet$
From graphs to colourings. Let $\mathcal{T}_{m}(n)$ be the set of
$(n,m)$-labelled Eulerian digraphs.
Here we establish a bijection between the $k$-colourings
of $nK_m$ and the elements of  $\mathcal{T}_{m}(n)$ with $k$ vertices.  Let now $\tau$ be an element of $\mathcal{T}_{m}(n)$
with $k$ vertices. Let , for $t=1, 2, \dots, k$, $B_t$ be the set of
edges of $\tau$ that are incident
in vertex $t$. It is obvious that
\[
\{B_1, B_2, \dots, B_k\}
\]
is a partition of the set of edges of  $\tau$. Now, by construction, one sees that
\[
\{\phi(B_1), \phi(B_2), \dots, \phi(B_k)\}
\]
is a $k$-colouring of $nK_m$.

For instance, the graph drawn in the picture \ref{fig:T_3(2)} corresponds to the following
colouring of $2K_3$
\[
v_{1,3}\ |\ v_{1,1} v_{2,1}\ |\ v_{1,2} v_{2,3\ }|\ v_{2,2}
\]

\medskip\noindent$\bullet$
From colourings to graphs. Let $\pi=\{B_1, B_2, ..., B_k\}$ be a colouring of
$nK_m$. We describe the directed graph, $\tau$, associated with $\pi$.

\smallskip
$\tau$ has $k$ vertices, say $w_1, w_2, ...,w_k$. To define the edges of $\tau$ we
assume first $m>1$. Let, for $i=1, 2, ..., n$, and $j=1, 2, ..., m$, $B_{p}$ be the
block of $\pi$ containing vertex $v_{i,j}$, and $B_{q}$ be the block of $\pi$ containing
vertex $v_{i,j+1}$. Notice that the indices of the vertices of $nK_m$ are considered
in clockwise order: $v_{i,m+1}\equiv v_{i,1}$. Then edge $e_{i,j}$ starts at $w_{p}$,
and ends at $w_{q}$.

If $m=1$, the edges of $\tau$ are loops. Specifically $e_{i,1}$ starts and ends at
$w_{t}$, where $t$ is the index of the block of $\pi$ containing vertex $v_{i,1}$.
\end{proof}

Thus we can say again that the number of $(n,m)$-labelled Eulerian digraphs with
$k$ vertices enjoy the same recurrence as $S_{m}(n,k)$.
Therefore counting these graphs corresponds to another combinatorial interpretations
of the coefficients of \cite{blasiak1}.

\medskip
We close the Section with a remark. It is obvious that any $k$-colouring of a given
set is fully described by any $k-1$ of its colour-classes. Accordingly, one can give
a slightly different interpretation of coefficients $S_{m}(n,k)$ by removing the last
edge from each cycle, producing a partition into labeled directed paths instead of
cycles. This model generalizes the concept of loopless, oriented multigraphs on $n$
labeled arcs as in A020556 in \cite{oeis}.

\section{Conclusions}
\label{sec:21case}

We hope that simpler combinatorial interpretations can be found for other generalized
Bell numbers and Stirling numbers of the second kind. In particular, we note that
our bijections (in Sections 2 and 3) exist for the disjoint union of cliques of different
sizes.

For the coefficients $S_{2,1}(n,k)$ we observe that Equation (15) of \cite{blasiak2}
implies that $S_{2,1}(n,k)$ is equal to the (positive) Lah number
\[
L(n,k)=\frac{n!}{k!}\binom{n-1}{k-1}\,.
\]
According to the classical interpretation of Lah numbers, this means that $S_{2,1}(n,k)$
counts the number of ordered placements of $n$ balls into $k$ boxes, and $B_{2,1}(n)$
counts the number of ordered placements of $n$ balls into boxes \cite{comtet}.

Table \ref{tab:S_21} provides some values of $S_{2,1}(n,k)$. Those values also appear in
\cite[Table 1]{blasiak1}, and in sequences A105278 of \cite{oeis}, where
further combinatorial interpretations of such coefficients are proposed.

\begin{table}[h]\footnotesize
 \begin{center}
  \begin{tabular}{c|rrrrrrrrr}
  \toprule
  $S_{2,1}(n,k)$ & \text{k=1} & 2 & 3 & 4 & 5 & 6 & 7 & 8 & 9 \\
  \midrule
 \text{n=1} & 1 & \text{} & \text{} & \text{} & \text{} & \text{} & \text{} & \text{} & \text{} \\
 2 & 2 & 1 & \text{} & \text{} & \text{} & \text{} & \text{} & \text{} & \text{} \\
 3 & 6 & 6 & 1 & \text{} & \text{} & \text{} & \text{} & \text{} & \text{} \\
 4 & 24 & 36 & 12 & 1 & \text{} & \text{} & \text{} & \text{} & \text{} \\
 5 & 120 & 240 & 120 & 20 & 1 & \text{} & \text{} & \text{} & \text{} \\
 6 & 720 & 1800 & 1200 & 300 & 30 & 1 & \text{} & \text{} & \text{} \\
 7 & 5040 & 15120 & 12600 & 4200 & 630 & 42 & 1 & \text{} & \text{} \\
 8 & 40320 & 141120 & 141120 & 58800 & 11760 & 1176 & 56 & 1 & \text{} \\
 9 & 362880 & 1451520 & 1693440 & 846720 & 211680 & 28224 & 2016 & 72 & 1 \\
  \bottomrule
  \end{tabular}
 \end{center}
 \caption{$S_{2,1}(n,k)$.}
 \label{tab:S_21}
\end{table}


Finally, we remark that the values of $S_{3,1}(n,1)$ in Table 1 in \cite{blasiak1} appear to be identical to the sequence
A001147 from \cite{oeis}, which counts the number of increasing ordered rooted trees on $n+1$ vertices. (Here
"increasing" means the vertices are labeled $0, 1, 2,..., n$ so that each path from the root has increasing labels.)
Similarly, the values $S_{4,1}(n,1)$  appear to be identical to the sequence A007559 from \cite{oeis}.


\newcommand{\etalchar}[1]{$^{#1}$}
\providecommand{\bysame}{\leavevmode\hbox to3em{\hrulefill}\thinspace}
\providecommand{\MR}{\relax\ifhmode\unskip\space\fi MR }
\providecommand{\MRhref}[2]{%
  \href{http://www.ams.org/mathscinet-getitem?mr=#1}{#2}
}
\providecommand{\href}[2]{#2}

\end{document}